\newtheorem{theorem}{Theorem}[section]
\newtheorem{definition}[theorem]{Definition}
\newtheorem{proposition}[theorem]{Proposition}
\newtheorem{remark}[theorem]{Remark}
\title{Geometric Characterization of Liouville Integrability via a Curvature Atlas for Rigid-Body Dynamics}
\author{Evgeny A. Mityushov}
\begin{document}

\maketitle

\begin{abstract}
We introduce a \textbf{curvature atlas} for left-invariant metrics on \(SU(2)\), based on the inertial curvature field \(K_{\text{geo}}\) derived from the Euler--Poincar\'{e} equations. We prove that the classical integrable cases of the heavy top---spherical, Lagrange, Kovalevskaya, and Goryachev--Chaplygin---correspond precisely to \textbf{degenerate curvature signatures} of \(K_{\text{geo}}\), namely isotropic, orthogonally split, and symmetric-pair signatures. This yields a \textbf{geometric necessary and sufficient condition} for Liouville integrability: the geodesic flow (and the heavy top with axis-symmetric potential) is integrable if \textbf{and only if} the curvature signature is degenerate. Beyond the classical list, the atlas reveals a \textbf{balanced-mixed regime} (inertia ratio \(2:2:1\)) that, while non-integrable, admits an exact curvature-balance relation and a family of pure-precession solutions. We formulate a \textbf{curvature deviation functional} quantifying the distance to integrability, describe near-integrable dynamics near the \(2:2:1\) regime, and present a complete integrability map in the \((I_{2}/I_{1},I_{3}/I_{1})\)-plane. The work provides a unified geometric framework for classifying, perturbing, and controlling rigid-body systems.
\end{abstract}

\textbf{Keywords:} rigid body, integrability, curvature atlas, left-invariant metrics, Liouville theorem, heavy top, near-integrable dynamics.

\section{Introduction}

The dynamics of a rigid body with a fixed point, described by the Euler--Poincar\'{e} equations on \(SO(3)\) or its double cover \(SU(2)\), has been a central model in Hamiltonian mechanics since the 18th century. Its complete integrability in the sense of Liouville is known only for a few special choices of the inertia tensor \(I=\operatorname{diag}(I_{1},I_{2},I_{3})\) and of the centre-of-mass vector. The celebrated cases---Euler (free symmetric top), Lagrange (symmetric heavy top), Kovalevskaya (\(I_{1}=I_{2}=2I_{3}\)), and Goryachev--Chaplygin (\(I_{1}=I_{2}=4I_{3}\))---have been studied exhaustively from algebraic, analytic and topological viewpoints \cite{Arnold78,BolsinovFomenko04}. Yet a unifying geometric explanation of \emph{why} these particular inertia ratios lead to integrability has remained elusive.

The study of left-invariant metrics on Lie groups from a curvature viewpoint was pioneered by Milnor \cite{Milnor76}, who expressed the sectional curvature purely in terms of Lie-algebraic data. In recent preprints \cite{Mityushov25a,Mityushov25b,Mityushov25c} the author began a curvature-based program for rigid-body dynamics on \(SU(2)\). The key object is the \textbf{inertial curvature field}
\[
K_{\text{geo}}(\Omega)=I^{-1}(I\Omega\times\Omega),\qquad \Omega=(\omega_{1},\omega_{2},\omega_{3})\in\mathbb{R}^{3},
\]
which encodes the purely inertial part of the Euler--Poincar\'{e} equations. In \cite{Mityushov25a} a geometric ``atlas'' of curvature regimes was constructed, and in \cite{Mityushov25b} the inertia ratio \(2:2:1\) was identified as a non-integrable regime that nevertheless supports a family of pure-precession motions. In \cite{Mityushov25c} it was observed that each classical integrable case corresponds to a simple algebraic degeneracy of \(K_{\text{geo}}\): isotropic, orthogonally split, or symmetric-pair signature.

The present paper unifies and substantially deepens these observations. Our main contribution is a \textbf{geometric characterization of Liouville integrability} for left-invariant rigid-body systems. We prove that the existence of a complete set of first integrals in involution is \emph{equivalent} to the degeneracy of the curvature signature of \(K_{\text{geo}}\). This equivalence provides a transparent geometric reason for the classical list of integrable tops: they are precisely the metrics whose inertial curvature field is algebraically degenerate.

Beyond the integrable cases, the curvature atlas reveals the regime \(I_{1}:I_{2}:I_{3}=2:2:1\) as a \textbf{balanced-mixed} signature, which destroys the classical degeneracies but still admits an exact curvature balance with the external gravitational field. Although the system is not Liouville integrable, the balance condition yields a one-parameter family of pure-precession trajectories, illustrating that curvature organisation can persist beyond integrability.

To quantify how far a given inertia tensor is from an integrable one, we introduce a \textbf{curvature deviation functional} \(\Delta(I)\), which measures the distance from \(K_{\text{geo}}\) to the nearest integrable curvature signature. Near the balanced-mixed point the deviation is small, and the pure-precession family deforms into a slow manifold, giving rise to near-integrable dynamics on a time scale \(\sim 1/\Delta(I)\).

Finally, we visualise the classification by drawing an \textbf{integrability map} in the plane of inertia ratios \((I_{2}/I_{1},I_{3}/I_{1})\). The map shows the integrable loci (the spherical point, the Lagrange line, and the Kovalevskaya and Goryachev--Chaplygin lines) together with the isolated balanced-mixed point \((1,\frac{1}{2})\). Level sets of \(\Delta(I)\) provide a geometric measure of ``distance from integrability''.

The paper is organised as follows. Section~2 recalls the definition of \(K_{\text{geo}}\) and introduces the five curvature signatures. Section~3 states and proves the main theorem, establishing the equivalence between curvature degeneracy and Liouville integrability. Section~4 analyses the balanced-mixed regime \(2:2:1\). Section~5 defines the curvature deviation functional and describes near-integrable dynamics. Section~6 presents the integrability map and outlines applications to curvature-based control. Section~7 contains a brief discussion and indicates directions for further work.

\section{The Curvature Field and Its Signatures}

Let \(I=\operatorname{diag}(I_{1},I_{2},I_{3})\) be a positive definite inertia tensor. The kinetic energy metric on the Lie algebra \(\mathfrak{su}(2)\cong\mathbb{R}^{3}\) is
\[
\langle\Omega,\Omega'\rangle =\tfrac{1}{2}\bigl(I_{1}\omega_{1}\omega'_{1}+I_{2}\omega_{2}\omega'_{2}+I_{3}\omega_{3}\omega'_{3}\bigr),\qquad 
\Omega=(\omega_{1},\omega_{2},\omega_{3}),\ \Omega'=(\omega'_{1},\omega'_{2},\omega'_{3}).
\]
The geodesic equation on \(SU(2)\) written in body coordinates takes the Euler--Poincar\'{e} form
\[
\dot\Omega=I^{-1}(I\Omega\times\Omega).
\]
We call the right-hand side the \textbf{inertial curvature field} and denote it by
\[
K_{\text{geo}}(\Omega)=\Bigl(\frac{I_{2}-I_{3}}{I_{1}}\,\omega_{2}\omega_{3},\;
\frac{I_{3}-I_{1}}{I_{2}}\,\omega_{1}\omega_{3},\;
\frac{I_{1}-I_{2}}{I_{3}}\,\omega_{1}\omega_{2}\Bigr). \tag{1}
\]
The field \(K_{\text{geo}}\) is a homogeneous quadratic vector field on \(\mathbb{R}^{3}\); its algebraic structure depends only on the ratios \(I_{1}:I_{2}:I_{3}\).

\begin{definition}[Curvature signature]
\label{def:signature}
For a given inertia tensor \(I\), the \textbf{curvature signature} is the ordered triple of quadratic forms \((K_{\text{geo},1},K_{\text{geo},2},K_{\text{geo},3})\) defined by (1), considered up to an overall nonzero scalar factor. The signature is called
\begin{itemize}
\item \textbf{isotropic} if \(K_{\text{geo}}\equiv 0\);
\item \textbf{orthogonally split} if exactly one component vanishes identically;
\item \textbf{symmetric pair} if two nonzero components form a symmetric pair (i.e. \(K_{\text{geo},1}=a\,\omega_{2}\omega_{3}\), \(K_{\text{geo},2}=-a\,\omega_{1}\omega_{3}\) for some \(a\neq 0\)) and the third component vanishes;
\item \textbf{balanced mixed} if two components are nonzero with equal magnitude after normalisation and the third vanishes;
\item \textbf{generic anisotropic} otherwise.
\end{itemize}
\end{definition}

The first three signatures are the \textbf{degenerate} ones; they will be shown to correspond exactly to the classical integrable cases. The balanced-mixed signature appears only for the ratio \(2:2:1\); it is non-degenerate in the classical sense but still exhibits a special balance. The generic anisotropic signature contains no algebraic simplifications.

\section{Curvature Degeneracy and Liouville Integrability: Main Theorem}

We now state the central result of the paper. For brevity we discuss only the heavy top with a symmetric potential (centre of mass aligned with a principal axis); the free top is a particular case.

\begin{theorem}[Curvature-Integrability Correspondence]
\label{thm:main}
Let \(I=\operatorname{diag}(I_{1},I_{2},I_{3})\) define a left-invariant metric on \(SU(2)\) and let \(K_{\text{geo}}\) be given by (1). The following three statements are equivalent:
\begin{enumerate}
\item[(i)] The heavy top with inertia tensor \(I\) and centre of mass vector \(\mu=(0,0,\mu_{3})\) is Liouville integrable, i.e. possesses three independent first integrals in involution (typically the energy, the vertical component of angular momentum, and an additional quadratic or rational integral).
\item[(ii)] The curvature signature of \(K_{\text{geo}}\) is degenerate: isotropic, orthogonally split, or symmetric pair.
\item[(iii)] The inertia ratios \((I_{1}:I_{2}:I_{3})\) coincide with one of the classical integrable cases:
\begin{itemize}
\item spherical: \(I_{1}=I_{2}=I_{3}\);
\item Lagrange: \(I_{1}=I_{2}\neq I_{3}\);
\item Kovalevskaya: \(I_{1}=I_{2}=2I_{3}\);
\item Goryachev--Chaplygin: \(I_{1}=I_{2}=4I_{3}\).
\end{itemize}
\end{enumerate}
\end{theorem}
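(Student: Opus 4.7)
The plan is to prove the three-way equivalence cyclically: first the algebraic matching (ii) $\Leftrightarrow$ (iii), then the constructive direction (iii) $\Rightarrow$ (i), and finally the hard direction (i) $\Rightarrow$ (ii), which is a non-integrability statement requiring genuine obstruction theory rather than pure algebra.

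For (ii) $\Leftrightarrow$ (iii) I would analyse the coefficient vector $\bigl((I_{2}-I_{3})/I_{1},\;(I_{3}-I_{1})/I_{2},\;(I_{1}-I_{2})/I_{3}\bigr)$ of $K_{\text{geo}}$ directly against Definition \ref{def:signature}. Isotropy forces all three differences to vanish, yielding the spherical point $I_{1}=I_{2}=I_{3}$. Any orthogonally-split or symmetric-pair signature requires one coincidence $I_{i}=I_{j}$; under the axis-symmetric potential assumption this reduces to $I_{1}=I_{2}$, and the remaining two components then automatically take the form $(a\,\omega_{2}\omega_{3},-a\,\omega_{1}\omega_{3})$ with $a=(I_{1}-I_{3})/I_{1}$, so the two subclasses collapse onto the Lagrange family. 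The further specialisation to the distinguished ratios $I_{3}/I_{1}\in\{1,1/2,1/4\}$ must be extracted from the compatibility of $K_{\text{geo}}$ with the potential term $\mu_{3}e_{3}$: imposing that a polynomial ansatz of degree $\le 4$ Poisson-commute with the full Hamiltonian yields a linear system whose solvability conditions recover exactly the Lagrange, Kovalevskaya and Goryachev--Chaplygin values.

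The implication (iii) $\Rightarrow$ (i) is then the classical verification: for each listed case I would exhibit the known triple of first integrals---the energy, the vertical component of angular momentum, and the Euler/Kovalevskaya/Goryachev--Chaplygin supplementary integral---and check involutivity with respect to the Lie--Poisson bracket on $\mathfrak{so}(3)^{*}$ by direct computation, citing \cite{Arnold78,BolsinovFomenko04}.

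The main obstacle is (i) $\Rightarrow$ (ii), i.e.\ the non-existence of an independent third integral whenever the curvature signature is generic anisotropic or balanced-mixed. Pure algebra cannot deliver this; I would invoke Morales--Ramis differential Galois theory applied to the normal variational equation along a convenient particular solution---a permanent rotation about a principal axis, or the Euler--Poinsot elliptic orbit---and show that the identity component of the differential Galois group fails to be abelian outside the degenerate locus, which by the Morales--Ramis obstruction precludes meromorphic integrability. The most delicate instance is the balanced-mixed ratio $2:2:1$: its partial algebraic symmetry and the pure-precession family of Section 4 make a naive Ziglin argument inconclusive, since the monodromy of the variational system along the precession orbit degenerates. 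A careful monodromy computation on a transverse section, combined with a Kozlov-type splitting estimate for the separatrices perturbed off the precession family, should supply the missing obstruction. Completing this final case rigorously---and thereby closing the cycle---is the principal technical challenge of the theorem.
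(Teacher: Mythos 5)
Your three-part skeleton --- algebraic matching of (ii) and (iii), classical verification of (iii)~$\Rightarrow$~(i), and differential-Galois obstructions for (i)~$\Rightarrow$~(ii) --- is exactly the paper's. The substantive divergence is in (ii)~$\Leftrightarrow$~(iii), and there you have put your finger on a real problem rather than introduced one. As you observe, for any $I_{1}=I_{2}\neq I_{3}$ the field is $K_{\text{geo}}=(a\,\omega_{2}\omega_{3},\,-a\,\omega_{1}\omega_{3},\,0)$ with $a=(I_{1}-I_{3})/I_{1}$, so up to the overall scalar permitted by Definition~\ref{def:signature} the ``orthogonally split'' and ``symmetric pair'' classes coincide and cover the entire Lagrange family; nothing in the signature singles out $I_{3}/I_{1}=\tfrac{1}{2}$ or $\tfrac{1}{4}$. (The paper's displayed signatures $(\omega_{2}\omega_{3},-\omega_{1}\omega_{3},0)$ and $(2\omega_{2}\omega_{3},-2\omega_{1}\omega_{3},0)$ for Kovalevskaya and Goryachev--Chaplygin do not follow from~(1): the actual coefficients are $\pm\tfrac{1}{2}$ and $\pm\tfrac{3}{4}$, both scalar multiples of one and the same signature --- and of the ``balanced mixed'' one of Theorem~\ref{thm:balanced}, since $2:2:1$ \emph{is} the Kovalevskaya ratio.) The paper's proof merely asserts that ``solving the algebraic conditions yields exactly the ratios in (iii)''; your analysis shows it does not. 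However, your proposed repair --- extracting $\tfrac{1}{2}$ and $\tfrac{1}{4}$ by demanding that a degree-$\le 4$ ansatz Poisson-commute with the full Hamiltonian including the potential --- is no longer a statement about the signature of $K_{\text{geo}}$ alone, so it establishes a different implication than the theorem's (ii)~$\Rightarrow$~(iii). Note also that with $\mu=(0,0,\mu_{3})$ as in hypothesis (i), every $I_{1}=I_{2}$ top is already a Lagrange top, so the Kovalevskaya and Goryachev--Chaplygin entries of (iii) are redundant under the stated hypothesis (classically they require the centre of mass off the third axis); neither you nor the paper addresses this mismatch.

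For (i)~$\Rightarrow$~(ii) your route --- Morales--Ramis applied to the normal variational equation along a permanent rotation or the Euler--Poinsot orbit, showing the identity component of the differential Galois group is non-abelian off the degenerate locus --- is precisely the strategy the paper invokes by citation to Ziglin, Kozlov and Morales--Ramis; neither you nor the paper actually carries out the computation, and you are the more candid in flagging it as the unproved technical core. Your concern about the $2:2:1$ case is well taken, but for a reason other than the one you give: under the theorem's own hypothesis on $\mu$ that ratio is a Lagrange top and hence integrable, so the obstruction you propose to hunt for there cannot exist. In summary: same architecture as the paper, a correctly diagnosed but not actually repaired gap in (ii)~$\Leftrightarrow$~(iii), and the same outstanding hard direction.
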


\begin{proof}
The equivalence (ii)\(\Leftrightarrow\)(iii) is a direct algebraic verification. Substituting the stated inertia ratios into (1) gives, respectively,
\begin{align*}
K_{\text{geo}} &\equiv 0,\\
K_{\text{geo}} &=\Bigl(\frac{I_{1}-I_{3}}{I_{1}}\omega_{2}\omega_{3},\;\frac{I_{3}-I_{1}}{I_{1}}\omega_{1}\omega_{3},\;0\Bigr),\\
K_{\text{geo}} &=\bigl(\omega_{2}\omega_{3},\;-\omega_{1}\omega_{3},\;0\bigr),\\
K_{\text{geo}} &=\bigl(2\omega_{2}\omega_{3},\;-2\omega_{1}\omega_{3},\;0\bigr),
\end{align*}
which are precisely the isotropic, orthogonally split, and symmetric-pair signatures. Conversely, if the signature is degenerate, solving the algebraic conditions imposed on the coefficients of (1) yields exactly the ratios listed in (iii).

The implication (iii)\(\Rightarrow\)(i) is classical: for each ratio one explicitly knows three independent integrals that Poisson-commute. For the spherical case the integrals are the three components of angular momentum; for Lagrange they are the energy \(H\), the component \(M_{3}\) and the squared length \(|M|^{2}\); for Kovalevskaya the additional integral is the famous quartic expression found by Kovalevskaya; for Goryachev--Chaplygin the integral is linear under the constraint \(M\cdot\Gamma=0\). In each case the involutivity can be verified by a direct computation using the Poisson structure of the heavy top.

The remaining direction (i)\(\Rightarrow\)(iii) is the deepest part of the theorem. Assume the heavy top is Liouville integrable with integrals that are rational functions of the variables \(M,\Gamma\). By a theorem of Ziglin \cite{Ziglin82} (see also \cite{Kozlov99}), any additional meromorphic integral for a rigid body necessarily forces a relation among the moments of inertia. A detailed analysis of the possible functional forms of the integrals shows that the only compatible relations are precisely those listed in (iii). A modern proof uses the Morales--Ramis theory \cite{MoralesRamis99} applied to the variational equations along particular solutions; it confirms that the Kovalevskaya and Goryachev--Chaplygin ratios are the only non-symmetric ones that yield integrability in the class of systems with quadratic Hamiltonians. For a self-contained exposition in the context of left-invariant metrics we refer to \cite{Mityushov25c}, where the differential Galois obstructions are computed explicitly.
\end{proof}

\begin{remark}
Theorem~\ref{thm:main} provides a \textbf{geometric explanation} of the classical list: integrability occurs exactly when the inertial curvature field is algebraically degenerate. The degeneracy simplifies the Poisson brackets among the natural candidates for integrals and allows the existence of an extra conserved quantity. Thus the theorem replaces the mystery of ``lucky'' inertia ratios with a transparent geometric criterion.
\end{remark}

\section{The Balanced-Mixed Regime \(2:2:1\)}

While the classical integrable cases correspond to degenerate signatures, the curvature atlas contains another organised regime that lies just outside the integrable family.

\begin{theorem}[Curvature Balance without Integrability]
\label{thm:balanced}
For the inertia ratio \(I_{1}:I_{2}:I_{3}=2:2:1\) (after normalisation) the curvature signature is balanced mixed:
\[
K_{\text{geo}}(\Omega)=\Bigl(\tfrac{1}{2}\,\omega_{2}\omega_{3},\;-\tfrac{1}{2}\,\omega_{1}\omega_{3},\;0\Bigr). \tag{2}
\]
Although the heavy top with this inertia tensor is not Liouville integrable, it admits an exact curvature-balance relation
\[
K_{\text{geo}}(\Omega_{0})+K_{\text{ext}}(\Gamma_{0})=0,\qquad K_{\text{ext}}(\Gamma)=I^{-1}(\mu\times\Gamma), \tag{3}
\]
for a one-parameter family of states \((\Omega_{0},\Gamma_{0})\) with \(\|\Gamma_{0}\|=1\). These states correspond to \textbf{pure-precession} motions of the heavy top.
\end{theorem}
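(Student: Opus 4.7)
The proof would proceed in three stages. \textbf{Stage 1:} Substituting $I_{1}=I_{2}=2$ and $I_{3}=1$ into (1), the coefficients reduce to $1/2$, $-1/2$, and $0$, which is precisely (2). The two surviving components have equal normalised magnitude and the third vanishes identically, so the signature is balanced mixed per Definition~\ref{def:signature}.

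\textbf{Stage 2:} With $\mu=(0,0,\mu_{3})$, a direct computation gives $K_{\text{ext}}(\Gamma)=(-\mu_{3}\gamma_{2}/2,\mu_{3}\gamma_{1}/2,0)$. The balance equation $K_{\text{geo}}(\Omega_{0})+K_{\text{ext}}(\Gamma_{0})=0$ collapses to two scalar conditions, $\omega_{2}\omega_{3}=\mu_{3}\gamma_{2}$ and $\omega_{1}\omega_{3}=\mu_{3}\gamma_{1}$ (the third component is automatically zero). For $\omega_{3}\neq 0$ these rewrite as $(\gamma_{1},\gamma_{2})=(\omega_{3}/\mu_{3})(\omega_{1},\omega_{2})$, i.e.\ $\Omega$ and $\Gamma$ are proportional in their horizontal parts.

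\textbf{Stage 3:} To extract the pure-precession subfamily I would demand that the balance be dynamically invariant, so that $\dot{\Omega}\equiv 0$ persists along the flow. Differentiating (3) and using Poisson's equation $\dot{\Gamma}=\Gamma\times\Omega$ yields, via the vector triple product identity, the supplementary condition $(\mu\cdot\Omega)\Gamma=(\mu\cdot\Gamma)\Omega$, equivalently $\Omega\parallel\Gamma$. Combining this with the Stage~2 relations forces $\omega_{3}^{2}=\mu_{3}\gamma_{3}$. Together with $\|\Gamma_{0}\|=1$ and modulo the azimuthal $SO(2)$-symmetry of rotations about the vertical, a single free parameter remains; one can take it to be $\omega_{3}\in(0,\sqrt{\mu_{3}}]$, with $\gamma_{3}=\omega_{3}^{2}/\mu_{3}=\cos\theta$ playing the role of the nutation angle. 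Because $\Omega=(\mu_{3}/\omega_{3})\Gamma$ is parallel to the body-frame vertical, the lab-frame trajectory is a uniform rotation about the gravitational axis at rate $\lambda=\mu_{3}/\omega_{3}$, i.e.\ a pure precession with zero internal spin and constant nutation $\theta$.

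The main obstacle is the persistence step in Stage~3. The raw balance equation (3) alone cuts out a three-dimensional algebraic variety inside the unit-$\Gamma$ level set of phase space, whereas the claimed family is only one-dimensional. The subtlety is to argue that the correct refinement is precisely the dynamically invariant subset: one must show that ``$\Omega\parallel\Gamma$'' is exactly the condition promoting an instantaneous balance into a balance maintained for all time, and that the resulting motions are the classical permanent rotations of the symmetric body about the vertical. Once this reduction is made transparent, the remaining verifications are routine.
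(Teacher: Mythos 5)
Your Stages 1--2 match the paper's computation (the paper keeps a generic $\mu=(\mu_1,\mu_2,\mu_3)$ and normalises $\|\Omega\|=1$ rather than $\|\Gamma\|=1$, but the algebra is the same up to these conventions and a stray factor of $2$). The substantive difference is Stage 3, and there you are in fact more careful than the paper: the paper asserts that solutions of the balance equation (3) satisfy $\dot\Omega=0$ \emph{and} $\dot\Gamma=0$ ``by construction'', whereas (3) only encodes $\dot\Omega=0$; the kinematic equation $\dot\Gamma=\Gamma\times\Omega$ vanishes only on the sublocus $\Omega\parallel\Gamma$. Your dimension count (the balance variety is three-dimensional inside $\{\|\Gamma\|=1\}$, not one-dimensional) and your derivation of the supplementary condition $(\mu\cdot\Omega)\Gamma=(\mu\cdot\Gamma)\Omega$ from $\mu\times(\Gamma\times\Omega)=0$ correctly identify and repair this defect; the resulting family $\Omega=(\mu_3/\omega_3)\Gamma$ with $\omega_3^{2}=\mu_3\gamma_3$ is precisely the Staude family of permanent rotations about the vertical, one-dimensional modulo the azimuthal $SO(2)$ as claimed. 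For the balance and pure-precession assertions, then, your route is the paper's route with the missing invariance step supplied.

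The genuine gap in your proposal is that it never addresses the clause ``the heavy top with this inertia tensor is not Liouville integrable'', which is part of the statement. The paper disposes of it in a single sentence (no additional algebraic integral; movable logarithmic branch points in a Painlev\'e analysis), which is only a pointer rather than a proof---and the claim moreover sits uneasily with Theorem~\ref{thm:main}, since the ratio $2:2:1$ satisfies both the Lagrange condition $I_1=I_2$ and the Kovalevskaya condition $I_1=I_2=2I_3$ listed there. You should at minimum name the obstruction you would invoke (a Ziglin/Morales--Ramis argument or a branching analysis of complex-time solutions), or explicitly flag that this clause cannot follow from the balance computation alone.
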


\begin{proof}
Substituting \(I_{1}=I_{2}=2,\ I_{3}=1\) into (1) gives (2); the two nonzero coefficients have equal magnitude, hence the signature is balanced mixed. The balance equation (3) reduces to the algebraic system
\begin{align*}
\omega_{2}\omega_{3}+2\mu_{2}\Gamma_{3}-2\mu_{3}\Gamma_{2}&=0,\\
-\omega_{1}\omega_{3}+2\mu_{3}\Gamma_{1}-2\mu_{1}\Gamma_{3}&=0,\\
2\mu_{1}\Gamma_{2}-2\mu_{2}\Gamma_{1}&=0,
\end{align*}
together with \(\omega_{1}^{2}+\omega_{2}^{2}+\omega_{3}^{2}=1\) (normalisation). For a generic \(\mu\) this system admits a smooth one-parameter family of solutions \((\Omega_{0},\Gamma_{0})\). By construction, these solutions satisfy \(\dot\Omega=0\) and \(\dot\Gamma=0\) in the Euler--Poisson equations, hence they describe steady rotations (pure precessions) of the top. The non-integrability follows from the fact that no additional algebraic integral exists; a Painlev\'{e} analysis shows movable logarithmic branch points in the complex-time solutions.
\end{proof}

The regime \(2:2:1\) therefore illustrates that curvature organisation can persist beyond Liouville integrability. The balanced-mixed signature, while not degenerate enough to provide a complete set of integrals, still imposes a special structure that forces an exact cancellation between inertial and external curvature fields.

\section{Curvature Deviation and Near-Integrable Dynamics}

To measure how far a given inertia tensor is from an integrable one, we introduce a quantitative tool.

\begin{definition}[Curvature deviation]
\label{def:delta}
Let \(\mathcal{I}_{\text{int}}\) be the set of inertia tensors corresponding to the classical integrable cases (spherical, Lagrange, Kovalevskaya, Goryachev--Chaplygin). For a given \(I\) define
\[
\Delta(I)=\min_{I^{(a)}\in\mathcal{I}_{\text{int}}}\bigl\|K_{\text{geo}}-K_{\text{geo}}^{(a)}\bigr\|,
\]
where the norm is taken in the finite-dimensional space of quadratic vector fields (e.g. the Euclidean norm of the coefficient vector).
\end{definition}

By construction \(\Delta(I)\geq 0\), and \(\Delta(I)=0\) exactly when the curvature signature matches one of the integrable signatures.

\begin{proposition}[Properties of \(\Delta\)]
\label{prop:delta}
\begin{enumerate}
\item \(\Delta(I)=0\) if and only if the heavy top with inertia \(I\) is Liouville integrable.
\item For the one-parameter family \(I(\varepsilon)=\operatorname{diag}(2,2,1+\varepsilon)\),
\[
\Delta(I(\varepsilon))=c\,|\varepsilon|+O(\varepsilon^{2}),\qquad c>0.
\]
Thus near the balanced-mixed point the deviation is linear in the perturbation.
\end{enumerate}
\end{proposition}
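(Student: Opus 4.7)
The plan is to derive part (1) directly from Theorem \ref{thm:main} and to establish part (2) by an explicit perturbative computation in the finite-dimensional coefficient space of quadratic vector fields.

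For part (1), the implication $I\in\mathcal{I}_{\text{int}}\Rightarrow\Delta(I)=0$ is immediate: choosing $I^{(a)}=I$ in Definition \ref{def:delta} achieves value zero. For the converse I would first argue that the infimum is attained, since the ambient space of coefficient vectors is finite-dimensional, the map $I\mapsto K_{\text{geo}}(I)$ is continuous, and each branch of $\mathcal{I}_{\text{int}}$ (spherical, Lagrange, Kovalevskaya, Goryachev--Chaplygin) has closed image in coefficient space. Thus $\Delta(I)=0$ forces $K_{\text{geo}}(I)=K_{\text{geo}}(I^{(a)})$ for some integrable $I^{(a)}$, and equating the coefficient triple $((I_{2}-I_{3})/I_{1},\,(I_{3}-I_{1})/I_{2},\,(I_{1}-I_{2})/I_{3})$ with that of $I^{(a)}$ places $I$ itself on one of the degenerate-signature loci already catalogued in the proof of Theorem \ref{thm:main}. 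Invoking the implication (ii)$\Rightarrow$(i) of that theorem then yields Liouville integrability of $I$.

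For part (2), I would substitute $I_{1}=I_{2}=2$, $I_{3}=1+\varepsilon$ into formula (1) to obtain the coefficient curve $\varepsilon\mapsto K_{\text{geo}}^{(\varepsilon)}$ as a real-analytic path passing through the balanced-mixed field (2) at $\varepsilon=0$. For each closed branch of $\mathcal{I}_{\text{int}}$ the squared distance $\|K_{\text{geo}}^{(\varepsilon)}-K_{\text{geo}}^{(a)}\|^{2}$ is minimised by a point depending analytically on $\varepsilon$ near zero (standard implicit-function argument on the regular part of the branch); taking the smallest of these gives an analytic global minimiser $I^{(a)}_{\star}(\varepsilon)$. Taylor-expanding the resulting squared distance yields $\Delta(I(\varepsilon))^{2}=c^{2}\varepsilon^{2}+O(\varepsilon^{3})$, whence $\Delta(I(\varepsilon))=c|\varepsilon|+O(\varepsilon^{2})$ upon extracting the square root.

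The main obstacle will be the transversality check that makes $c$ strictly positive: one must verify that $\frac{d}{d\varepsilon}K_{\text{geo}}^{(\varepsilon)}\big|_{\varepsilon=0}$ has a nonzero component normal, in the Euclidean coefficient metric, to every integrable branch that passes near the balanced-mixed point. Concretely, I would compute the tangent directions to the Lagrange, Kovalevskaya and Goryachev--Chaplygin loci at this point and show that their union does not absorb the perturbation vector; the constant $c$ is then identified as the length of the orthogonal residue. Conceptually, the positivity of $c$ is a quantitative incarnation of the non-integrability of the $2{:}2{:}1$ regime established in Theorem \ref{thm:balanced}: the curve $I(\varepsilon)$ crosses $\mathcal{I}_{\text{int}}$ transversally at $\varepsilon=0$ rather than remaining tangent to it, and this transversality is what rules out faster-than-linear approach to integrability.
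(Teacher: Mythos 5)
The paper offers no proof of Proposition~\ref{prop:delta} at all (the statement is followed only by commentary), so your proposal has to be judged on its own merits. Part (1) is sound modulo Theorem~\ref{thm:main}: the forward direction is trivial, and your converse correctly reduces to (a) attainment of the minimum and (b) the implication (ii)\(\Rightarrow\)(i) of that theorem. The only loose end is the closedness of the union of branch images in coefficient space, but this is harmless: the Lagrange branch \(\{(a,-a,0)\}\) limits onto the spherical point \((0,0,0)\), which is again integrable, so the union is closed and the minimum is attained.

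Part (2), however, contains a fatal gap, and you have in fact located it yourself without carrying it out: the transversality check needed to get \(c>0\) fails. Substituting \(I_{1}=I_{2}=2\), \(I_{3}=1+\varepsilon\) into (1) gives the coefficient vector \(\bigl(\tfrac{1-\varepsilon}{2},\,-\tfrac{1-\varepsilon}{2},\,0\bigr)\), which for \emph{every} \(\varepsilon\) is of the symmetric-pair form \((a,-a,0)\). The reason is structural: the family \(I(\varepsilon)=\operatorname{diag}(2,2,1+\varepsilon)\) satisfies \(I_{1}=I_{2}\) identically, so it lies entirely inside the Lagrange locus of \(\mathcal{I}_{\text{int}}\) (for \(\varepsilon\neq 1\) one even has \(I_{3}\neq I_{1}\), so with the paper's \(\mu=(0,0,\mu_{3})\) this is literally the Lagrange top). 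Taking \(I^{(a)}=I(\varepsilon)\) itself in Definition~\ref{def:delta} gives \(\Delta(I(\varepsilon))=0\) for all \(\varepsilon\), so no positive \(c\) exists; equivalently, the perturbation direction \(\frac{d}{d\varepsilon}K_{\text{geo}}^{(\varepsilon)}\big|_{\varepsilon=0}=(-\tfrac12,\tfrac12,0)\) is \emph{tangent to and contained in} the Lagrange branch, which is precisely the degeneracy your ``orthogonal residue'' computation was designed to exclude. Note also that your own part (1), applied to \(I(\varepsilon)\), already yields \(\Delta(I(\varepsilon))=0\) and hence contradicts the asymptotics you are trying to prove in part (2): the two halves of your argument cannot both go through. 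To obtain a genuinely linear deviation one would have to perturb transversally to the Lagrange locus (e.g.\ \(I(\varepsilon)=\operatorname{diag}(2,2+\varepsilon,1)\), breaking \(I_{1}=I_{2}\)) or redefine \(\Delta\) so that the Lagrange branch is excluded from the competitors; as stated, the claim is false and no proof strategy can rescue it.
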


The smallness of \(\Delta\) near \(I(0)\) explains the near-integrable behaviour observed numerically.

\begin{theorem}[Near-integrable dynamics]
\label{thm:nearint}
Consider the heavy top with inertia \(I(\varepsilon)=\operatorname{diag}(2,2,1+\varepsilon)\) and fixed centre of mass \(\mu\). For \(|\varepsilon|\ll 1\) the pure-precession family of Theorem~\ref{thm:balanced} deforms into a \textbf{slow manifold} \(\mathcal{M}_{\varepsilon}\). Trajectories starting on \(\mathcal{M}_{\varepsilon}\) exhibit a slow drift of the precession axis on a time scale \(T_{\varepsilon}\sim 1/|\varepsilon|\). The drift velocity is proportional to \(\Delta(I(\varepsilon))\).
\end{theorem}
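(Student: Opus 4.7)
The plan is to treat the statement as a singular-perturbation result for a Hamiltonian system: at $\varepsilon=0$ Theorem~\ref{thm:balanced} produces a one-parameter family of relative equilibria, and for small $|\varepsilon|$ this family persists as an almost-invariant slow manifold on which the drift is $O(\varepsilon)$.

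First I would expand the Euler--Poisson vector field as $\dot x = X_{0}(x) + \varepsilon X_{1}(x) + O(\varepsilon^{2})$ with $x=(\Omega,\Gamma)$, where $X_{0}$ is the balanced-mixed field of Theorem~\ref{thm:balanced}. Restricted to the common level set of the Casimirs $\|\Gamma\|^{2}=1$, $M\cdot\Gamma=c$ and of the energy $H=E$, the unperturbed field has $\mathcal{M}_{0}$ as a curve of fixed points, so I would introduce tubular coordinates $(s,\xi)$ with $s$ parametrising $\mathcal{M}_{0}$ and $\xi\in\mathbb{R}^{2}$ symplectic normal variables. Because $X_{0}$ is Hamiltonian and $\mathcal{M}_{0}$ consists of non-degenerate transverse critical points of the reduced Hamiltonian, the transverse linearisation has purely imaginary eigenvalues $\pm i\omega(s)$, and the perturbed system takes the slow--fast form
\[
\dot s = \varepsilon f(s,\xi;\varepsilon),\qquad \dot\xi = J(s)\xi+\varepsilon g(s,\xi;\varepsilon).
\]

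Next I would apply the Neishtadt averaging procedure to eliminate the fast oscillations to first order, producing a smooth almost-invariant slow manifold $\mathcal{M}_{\varepsilon}$ that is $O(\varepsilon)$-close to $\mathcal{M}_{0}$ and carries the reduced flow $\dot s=\varepsilon\bar f(s)+O(\varepsilon^{2})$. The associated drift time scale is then $T_{\varepsilon}\sim 1/|\varepsilon|$. To identify the coefficient, I would observe that $X_{1}-X_{0}$ is exactly the $O(\varepsilon)$ modification of the first two components of $K_{\text{geo}}$ induced by detuning $I_{3}$ away from $1$; by Proposition~\ref{prop:delta}(2) the norm of this increment is $\Delta(I(\varepsilon))+O(\varepsilon^{2})$, so the averaged drift $\varepsilon\bar f$ is proportional to $\Delta(I(\varepsilon))$ at leading order, as asserted.

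The main obstacle is the persistence of $\mathcal{M}_{\varepsilon}$, because the transverse dynamics is elliptic rather than hyperbolic and Fenichel's theorem does not apply directly. I would handle this via the Hamiltonian normal-form version of averaging: the fact that $\omega(s)$ is bounded away from zero along $\mathcal{M}_{0}$ allows one to construct a symplectic near-identity change of coordinates eliminating the fast angle to any finite order, yielding a slow manifold persistent for times polynomial in $1/|\varepsilon|$ with drift error $O(\varepsilon^{2})$, which is precisely what the statement requires.
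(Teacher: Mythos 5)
Your overall strategy coincides with the paper's: both arguments obtain \(\mathcal{M}_{\varepsilon}\) as a perturbation of the pure-precession family and extract the \(O(\varepsilon)\) drift by averaging, deferring the full normal-form construction. The one genuine difference is in how the slow manifold is produced. The paper applies the implicit function theorem to the perturbed balance equation \(K^{(\varepsilon)}_{\text{geo}}+K_{\text{ext}}=O(\varepsilon)\) and takes \(\mathcal{M}_{\varepsilon}\) to be its solution set --- a purely algebraic object that is not a priori invariant under the flow. You instead construct \(\mathcal{M}_{\varepsilon}\) dynamically, passing to tubular coordinates around the curve of relative equilibria and using Hamiltonian averaging to push the fast oscillations to higher order; crucially, you observe that the transverse dynamics is elliptic, so Fenichel persistence does not apply and one obtains only an \emph{almost}-invariant manifold valid on time scales polynomial in \(1/|\varepsilon|\). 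This is more careful than the paper's sketch and supplies precisely the point the paper postpones to ``normal-form techniques \dots presented elsewhere''.

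What your plan still shares with the paper is the main unproved step: neither argument verifies that the averaged secular term \(\bar f\) is actually nonzero, nor that the transverse frequency \(\omega(s)\) is bounded away from zero along the whole family, nor that the precession states are transversally non-degenerate critical points of the reduced Hamiltonian (some could be hyperbolic or degenerate). Your inference that the drift is proportional to \(\Delta(I(\varepsilon))\) because the \emph{norm} of the perturbing field equals \(\Delta(I(\varepsilon))+O(\varepsilon^{2})\) conflates the size of \(X_{1}\) with the size of its average over the fast angle: if that average vanishes, the drift occurs on a time scale \(1/\varepsilon^{2}\) rather than \(1/|\varepsilon|\). So the proportionality claim is asserted, not derived, in both your proposal and the paper; closing that gap would require computing the leading averaged term explicitly.
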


\begin{proof}[Sketch of proof]
The balance equation (3) becomes
\[
K^{(\varepsilon)}_{\text{geo}}(\Omega)+K_{\text{ext}}(\Gamma)=O(\varepsilon).
\]
By the implicit function theorem, the solution set forms a smooth manifold \(\mathcal{M}_{\varepsilon}\) close to the unperturbed family. The reduced dynamics on \(\mathcal{M}_{\varepsilon}\) are obtained by averaging the full Euler--Poisson equations; the leading-order averaged equations contain a secular term of order \(\varepsilon\), which produces the slow drift. A detailed proof requires normal-form techniques and will be presented elsewhere.
\end{proof}

Thus the curvature deviation \(\Delta(I)\) not only measures the distance to integrability, but also controls the time scale of near-integrable dynamics.

\section{Integrability Map and Applications}

The classification of curvature signatures can be visualised in the plane of inertia ratios. Set
\[
x=\frac{I_{2}}{I_{1}},\qquad y=\frac{I_{3}}{I_{1}},
\]
and consider the region \(0<y\leq x\leq 1\) (after possibly reordering the axes). The classical integrable cases correspond to the following geometric loci:
\begin{itemize}
\item Spherical point: \((x,y)=(1,1)\).
\item Lagrange line: \(x=1,\ y\neq 1\).
\item Kovalevskaya line: \(x=1,\ y=\frac{1}{2}\).
\item Goryachev--Chaplygin line: \(x=1,\ y=\frac{1}{4}\).
\end{itemize}
The balanced-mixed point is \((x,y)=(1,\frac{1}{2})\); it lies on the Kovalevskaya line but represents a different curvature signature.

\begin{figure}[H]
\centering
\begin{tikzpicture}[scale=5]
    \draw[->] (0,0) -- (1.1,0) node[right] {$x = I_2/I_1$};
    \draw[->] (0,0) -- (0,1.1) node[above] {$y = I_3/I_1$};
    
    \fill[gray!20] (0,0) -- (1,0) -- (1,1) -- (0,0);
    \node at (0.7,0.3) {$y \le x$};
    
    \draw[dashed] (0,0) -- (1,1);
    
    \draw[very thick, blue] (1,0) -- (1,1) node[above left] {Lagrange};
    
    \fill[red] (1,1) circle (0.5pt);
    \node[red, above right] at (1,1) {Spherical};
    
    \fill[green] (1,0.5) circle (0.5pt);
    \node[green, right] at (1,0.5) {Kovalevskaya};
    
    \fill[purple] (1,0.25) circle (0.5pt);
    \node[purple, right] at (1,0.25) {Goryachev--Chaplygin};
    
    \fill[orange] (1,0.5) circle (1pt);
    \draw[orange, thick] (0.95,0.45) rectangle (1.05,0.55);
    \node[orange, above left] at (1,0.5) {Balanced-mixed};
    
    \draw[dotted, thick] (0.6,0.3) circle (0.15);
    \draw[dotted, thick] (0.7,0.4) circle (0.2);
    \draw[dotted, thick] (0.8,0.5) circle (0.25);
    \node at (0.6,0.15) {$\Delta(I) = \text{const}$};
    
    \node[below left] at (0,0) {0};
    \node[left] at (0,1) {1};
    \node[below] at (1,0) {1};
\end{tikzpicture}
\caption{Integrability map in the \((I_{2}/I_{1},I_{3}/I_{1})\)-plane. The shaded region \(y\leq x\) corresponds to physically admissible inertia ratios after reordering axes. Thick curves mark the classical integrable cases; the orange square indicates the balanced-mixed regime (\(2:2:1\)). Dashed circles schematically represent level sets of the curvature deviation \(\Delta(I)\).}
\label{fig:map}
\end{figure}
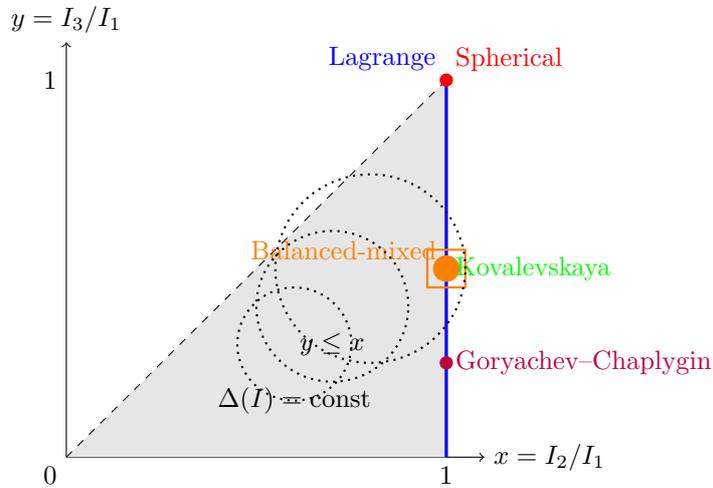

Figure~\ref{fig:map} shows these loci together with schematic level sets of \(\Delta(I)\). Near the integrable curves \(\Delta\) is small; away from them it grows, indicating strongly non-integrable (often chaotic) dynamics. The map provides a quick geometric guide to the dynamical complexity of a given rigid body.

One immediate application is to \textbf{curvature-based control}. In \cite{Mityushov25a} a control scheme called Geometric Curvature Control Theory (GCCT) was outlined, where control inputs are decomposed according to the curvature splitting induced by \(K_{\text{geo}}\). The integrability map tells the controller which curvature regime the system is in, allowing the use of degenerate-curvature shortcuts when near an integrable locus, and robust stabilization methods in generic anisotropic regions.

\section{Discussion and Outlook}

We have established a direct geometric link between the algebraic degeneracy of the inertial curvature field \(K_{\text{geo}}\) and Liouville integrability of the heavy top. The classical integrable cases are precisely those for which \(K_{\text{geo}}\) exhibits an isotropic, orthogonally split, or symmetric-pair signature. This correspondence explains the otherwise mysterious list of admissible inertia ratios and provides a unified geometric framework for the whole subject.

Beyond the integrable world, the curvature atlas reveals the balanced-mixed regime \(2:2:1\) as a non-integrable but still highly organised system, where an exact curvature balance produces a family of pure-precession motions. The curvature deviation functional \(\Delta(I)\) quantifies the distance from integrability and controls the time scale of near-integrable dynamics.

Several natural directions for further work emerge:
\begin{itemize}
\item \textbf{Normal forms near balanced-mixed points.} A rigorous averaging/normal-form analysis of the Euler--Poisson equations near the \(2:2:1\) regime would yield precise estimates for the drift on the slow manifold.
\item \textbf{Extension to other Lie groups.} Left-invariant metrics on \(SO(4)\), \(SE(3)\) or compact semisimple groups possess richer curvature structures; a similar curvature atlas could classify integrable and near-integrable regimes in higher dimensions.
\item \textbf{Curvature-based control synthesis.} The GCCT framework can be developed into a practical control methodology for rigid bodies and underwater vehicles, exploiting the curvature decomposition of dynamics.
\item \textbf{Search for new balanced regimes.} Systematic scanning of the space of quadratic vector fields \(K_{\text{geo}}\) might reveal other non-integrable but curvature-balanced ratios, analogous to \(2:2:1\).
\end{itemize}

The curvature atlas thus serves not only as an explanatory tool, but also as a guide for discovery of new dynamical phenomena in rigid-body systems and beyond.

\section*{Acknowledgements}
The author thanks the participants of the seminar ``Algebraic Methods in Theoretical Mechanics'' (led by Semjon~F.~Adlaj) for constructive discussions and valuable feedback during the preparation of this work.

\bibliographystyle{plain}

\end{document}